\documentclass[preprint,5p,times,twocolumn,authoryear]{elsarticle}

\usepackage{amsmath,amssymb,mathtools}
\usepackage{amsthm}
\usepackage{enumitem}
\usepackage{graphicx}
\usepackage{algorithm}
\usepackage{algpseudocode}
\usepackage{url}
\newtheorem{theorem}{Theorem}

\newcommand{\norm}[1]{\lVert #1 \rVert}

\journal{Cognitive Systems Research}

\begin{document}

\begin{frontmatter}

\title{Interpretation, Learning, and Empathy as One Constraint:\\ A Residual-Adequacy Architecture with Accountable Abstention}

\author{Chainarong Amornbunchornvej}
\address{National Electronics and Computer Technology Center (NECTEC), 112 Phahonyothin Road, Khlong Nueng, Khlong Luang, Pathum Thani 12120, Thailand}
\ead{chainarong.amo@nectec.or.th}

\begin{abstract}
An agent must act on the situation before it, learn what it cannot yet represent, and model other agents well enough to coordinate. These faculties are usually realized by separate mechanisms, yet they share a failure mode: the situation can exceed what the agent can currently represent, and the honest response is then a principled refusal that says what was missing. We develop a small cognitive architecture in which these limits arise from a single quantity. An Interpretation--Decision Unit (IDU) interprets a content vector through a family of regimes --- local representational frames with private bases --- and decides which actions the interpretation licenses; a scalar residual of the content against the active regimes' representational scope drives the unit. Low residual with a clean licensing emits an action, otherwise the unit re-interprets, attempts a description-length-justified expansion of its representation, or halts with a typed, witnessed terminal. We prove the unit is total and deterministic: for any content and fixed configuration it halts in finitely many bounded-cost steps with a unique terminal witness, so abstention carries its cause by construction. We then show, by binding the architecture's open parameters without changing its mechanics, that the same residual-against-scope constraint recovers three independently documented phenomena at three scopes: the typology of not-knowing (typed abstention), a misunderstanding between agents that is forced, localized to one shared concept, and structurally invisible to the agent committing it (bounded empathy), and prerequisite dependence in learning derived from a bounded focus window rather than posited (developmental prerequisites). Each instantiation is worked for both a natural and an artificial agent and states a falsifiable prediction, showing that the same constraint can model limits in both human and machine cognition. The account contributes a unification --- three faculties as one constraint --- and a notion of accountable abstention that is typed and witnessed by construction.
\end{abstract}


\begin{keyword}
cognitive architecture \sep representation \sep abstention \sep theory of mind \sep learnability \sep minimum description length
\end{keyword}

\end{frontmatter}

\section{Introduction}

An agent embedded in the world must do three things that are usually studied apart: act on the situation in front of it, learn what it cannot yet represent, and model other agents well enough to coordinate. Each has a mature literature --- bounded and selective action, developmental and curriculum learning, theory of mind --- and each is typically realized by its own mechanism. Yet the three share a common failure mode that the separate treatments obscure: an agent can fail to act, fail to learn, or fail to understand another \emph{for the same underlying reason} --- the situation exceeds what the agent can currently represent. When that happens, the honest response is not a confident output but a principled refusal to commit, accompanied by an account of what was missing. Most systems cannot give that account: they either force an answer or abstain with a single undifferentiated signal of low confidence.

This paper develops a small cognitive architecture in which these limits arise from one quantity. The core is an \emph{Interpretation--Decision Unit} (IDU): a computing unit that interprets a content vector through a family of \emph{regimes} --- local representational frames, each with a private basis --- and decides which actions the interpretation licenses. A single scalar, the \emph{residual} of the content against the active regimes' representational scope, drives the unit. Low residual with a clean licensing yields an emission; otherwise the unit re-interprets, attempts a bounded expansion of its representation --- admitted only when the residual saving exceeds the representational cost of adding a dimension --- or halts. Crucially, every halt is \emph{typed}: it returns a witness recording which of three structurally distinct causes obtained --- an unreconciled conflict, an unrepresentable residual, or a values-licensed stop --- and the structure responsible for it. The IDU is embedded in an \emph{Attention State Machine} (ASM) that supplies its content and goal context, sets its mode (online action or offline learning), and routes its witnesses; the ASM is the frame in which the unit becomes consequential, not a separate contribution.

The central claim is one of \emph{unification}: the limit on acting, the limit on what can be learned now, and the limit on understanding another agent are governed by one quantity, \emph{residual adequacy} --- the residual of content against the active regimes' representational scope --- evaluated at three scopes. We call the resulting principle the \emph{residual-adequacy constraint}. We make this concrete through instantiations that recover established phenomena by binding the architecture's open parameters, without changing its mechanics; each phenomenon is exhibited in both a human and a machine instance, so the unification is claimed for natural and artificial agents alike. Two properties give the account its teeth without recourse to experiment: the IDU is total and deterministic, halting in finitely many bounded-cost steps with a unique typed witness (Theorem~1), which turns ``accountable abstention'' into a guarantee; and each instantiation states a falsifiable prediction, so the unification can be wrong rather than merely redescriptive.

\subsection{Contributions}

\begin{enumerate}
\item \textbf{A unified residual-adequacy constraint.} We show that the limit on acting, the limit on what can be learned in the present, and the limit on understanding another agent are one constraint --- the residual-adequacy constraint, the residual of content against the active regimes' representational scope --- evaluated at three scopes, rather than three separate mechanisms.

\item \textbf{Accountable, typed abstention with a totality guarantee.} The IDU halts in one of three structurally distinct, witnessed terminals or a clean emission, and we prove (Theorem~1) that for any content and fixed configuration it terminates in finitely many bounded-cost steps with a unique terminal witness. Abstention thus carries its cause by construction.

\item \textbf{A structural account of bounded empathy.} From shared labels over private bases, we derive a misunderstanding between agents that is forced, localized to a single shared concept, and structurally invisible to the agent committing it --- a prediction inference-based theory of mind does not naturally make. This is the sharpest single result and the architecture's principal reason to exist.

\item \textbf{Derived developmental prerequisites.} Prerequisite dependence is derived from the geometry of focus-limited learning, agent-relative to configuration and focus capacity, rather than posited as a fixed order.
\end{enumerate}

The attention machine and the termination theorem are framing and support for these contributions, not contributions in themselves.

\section{Related work}\label{sec:related}

\paragraph{Cognitive architectures} Production-system and memory-based architectures such as Soar \citep{laird2019soar}, ACT-R \citep{anderson2004actr}, and Sigma \citep{rosenbloom2016sigma} achieve broad functionality by composing distinct mechanisms. The closest precedent for the present account is Soar's treatment of impasses: when the available knowledge does not determine what to do, Soar detects a typed impasse --- no-change, tie, conflict, or no-applicable-operator --- and creates a subgoal to resolve it \citep{laird2019soar}. This is structurally cognate to our typed freezes: a stop that records \emph{why} it occurred and routes accordingly. We generalize the move in two ways. First, the stopping condition is a single representational quantity --- the residual of content against the active regimes' scope --- rather than a set of architecture-specific impasse categories, so the same condition that halts action also bounds learnability and locates misunderstanding. Second, the typing is a property of the witnessed terminal itself, with a totality guarantee (Theorem~1) that every run ends in one of finitely many typed outcomes. The IDU is thus intended as a component that could sit within such architectures, sharpening and unifying their separate stopping, learning, and social mechanisms, not as a replacement for them.

\paragraph{Predictive processing and prediction-error accounts} A large family of theories casts cognition as the minimization of prediction error against an internal model, with surprise driving both inference and learning \citep{clark2016surfing}. The residual that drives the IDU is in the same spirit --- a measure of what the current representation fails to capture. We differ in what is done with the residual rather than in the residual itself. Predictive-processing accounts typically drive continuous belief update toward error reduction; the IDU instead uses the residual to \emph{gate} discrete outcomes --- clean action only under low, conflict-free residual, otherwise a typed, witnessed stop or a description-length-justified expansion of the representation. The contribution is not error-as-driver, which we share, but typed termination and the claim that one such residual, read at three scopes, bounds action, learning, and empathy together.

\paragraph{Machine abstention and selective prediction} A separate literature treats the option to withhold a response: classification with a reject option \citep{chow1970,herbei2006} and selective prediction \citep{elyaniv2010,geifman2017} couple a predictor with a scalar confidence function that trades coverage against risk, and metacognitive accounts summarize not-knowing as a felt confidence signal \citep{koriat2000}. These report \emph{that} a system declined, along one continuum, not \emph{why}. Our typed abstention preserves the cause: refusals partition into three structurally distinct, separately recoverable terminals, a distinction a scalar gate cannot represent.

\paragraph{Theory of mind} Mindreading is standardly modelled either as theory-based inference or as simulation, with the observer recovering the other's state from behavior \citep{baroncohen1985,goldman2006}; a robust empirical finding is that perspective-taking is egocentrically biased and that the bias is frequently undetected by the one exhibiting it \citep{nickerson1999,birch2007}. These accounts attribute residual misunderstanding to noise, limited data, or inferential bias, and so require a separate error model for an error the observer cannot see. The present account derives that error structurally: agents share public labels but hold private bases, so an observer modelling another through its own basis commits a misunderstanding that is forced, localized to the shared concept, and invisible until divergent action exposes it. To our knowledge this combination --- forced, located, and self-invisible by construction --- is not predicted by inference or simulation accounts without an added error model.

\paragraph{Conceptual spaces and representational geometry} The regime is a local representational space in the tradition of conceptual spaces, where meaning is modelled geometrically as structure over quality dimensions \citep{gardenfors2004}. The companion development of agents as private value spaces related by interpretation maps \citep{amornbunchornvej2025interpretation}, and of representational growth as description-length-gated basis extension \citep{amornbunchornvej2025conceptual}, supply the geometric and information-theoretic foundations the present architecture instantiates: regimes are the interpretation maps, and basis expansion is the growth operation, embedded here in a unit that acts, halts, and learns under the residual-adequacy constraint.

\paragraph{Developmental prerequisites} For the learnability instantiation specifically, stage theories describe an ordered progression of competences \citep{piaget1952} and knowledge-space theory formalizes prerequisite structure by positing a relation among knowledge items \citep{doignon1985,falmagne2011}. Where these supply the order as data, the present model derives prerequisite dependence from residual dimensionality measured against a bounded focus window, making it agent-relative rather than fixed.

The remainder of the paper specifies the IDU and its termination guarantee (Section~\ref{sec:idu}), the attention machine that embeds it (Section~\ref{sec:asm}), and the instantiations that exercise the unified constraint (Section~\ref{sec:inst}).

\section{Interpretation--Decision Unit (IDU)}\label{sec:idu}

This specifies the \emph{Interpretation--Decision Unit} (IDU), a computing unit within a larger system (not designed here) that interprets content $c$ and decides which actions to take on it. The IDU and its attention machine are best understood as a plug-in cognitive control module: the surrounding architecture may supply perception, memory, appraisal, planning, and motor control, while the attention machine prepares the content presented to the IDU and the IDU determines whether that content warrants clean action, re-interpretation, representational expansion, or typed freezing. We make no claim about those surrounding faculties; the scope here is the control module alone. The unit runs in two modes: \emph{online}, the operation below, in which the configuration
\[
\Gamma = \big(\{R_i\},\ \mathrm{Actions},\ G,\ G_{cf},\ \{\theta_i\},\ \{\phi_i\},\ \theta_r,\ \tau\big)
\]
--- the regime family, the action set, the Regime-Act and Act-conflict graphs, and the thresholds --- is read-only except for the emergency basis-expansion of the Decision step; and \emph{learning}, an offline mode in which the unit may create, edit, or delete regimes, actions, and the related graphs.

The unit embodies a single principle: an agent should act only on understanding that is both adequate and unambiguous. Adequacy is measured by the residual: low residual means the content is representable by the active regimes. Unambiguity is measured by the absence of conflict among the licensed actions. When either condition fails, the unit does not emit a clean action: it re-interprets, expands its representation, or halts with a typed witness. Clean action is therefore gated by representational warrant: the residual must be low, the licensed actions must be conflict-free, and no higher-priority halt must be licensed. Otherwise, non-emission is witnessed and routed as re-interpretation, representational expansion, or a typed freeze.

In outline, the unit processes content $c$ in a few steps. It first activates the regimes whose frames both represent $c$ and are live on it, and from those computes the residual $r(c)$ --- the part of $c$ the active regimes leave unexplained. The active regimes license a set of actions through the Regime-Act graph $G$, and the Act-conflict graph $G_{cf}$ marks which licensed actions conflict. The Decision step then applies the warrant test: if a regime licenses $\mathrm{HALT}$ it freezes; otherwise, if the residual is low and the licensed actions are conflict-free, it emits a clean action; if there is a conflict it re-interprets and re-enters; and if the residual is high it attempts a description-length-justified basis expansion, succeeding back into the loop or failing into a typed freeze. Every terminal returns a witness recording what was active and what, if anything, was left unresolved. The remainder of this section makes each step precise.

\subsection{Regime}

The architecture can be stated for general interpretation maps; the present paper uses a linear instance, made explicit below. Let $V = \mathbb{R}^n$ with the standard inner product and norm $\norm{\cdot}$. A content vector is $c \in V$.

\medskip
A regime is $R_i = (\ell_i, D_i, S_i, U_i, P_i, \theta_i, \phi_i)$, with $U_i \subseteq \mathbb{R}^{D_i}$, where $\mathbb{R}^{D_i}$ denotes the coordinate block indexed by $D_i$:
\begin{itemize}
\item $\ell_i$: a public label drawn from a shared label set $\mathcal{L}$.
\item $D_i \subseteq \{1, \dots, n\}$: the coordinate set of regime $i$.
\item $S_i$: coordinate selector; $S_i c$ keeps the coordinates in $D_i$ and zeros the rest.
\item $U_i \subseteq \mathbb{R}^{D_i}$: the regime's private subspace --- the directions within its coordinate block that it can represent.
\item $P_i$: orthogonal projector onto $U_i$, so $P_i(S_i c)$ is the part of $c$ the regime represents.
\item $\theta_i > 0$: the activation tolerance on the regime's misfit (below).
\item $\phi_i \ge 0$: a presence floor; the block must carry signal $\norm{S_i c}\ge\phi_i$ for the regime to activate.
\end{itemize}
Each regime is a local value space (i.e.\ a local representational space) in the sense of cognitive geometry \citep{amornbunchornvej2025interpretation}: $U_i$ is the agent's representational frame for the label $\ell_i$, and the regime reads content through an \emph{interpretation map} that carries the part of $c$ expressible in that frame, leaving a residual for the part it cannot. The engine requires only that this map be well defined and yield a residual; it does not require linearity. In this paper we take the linear instance --- $S_i$ selects the regime's coordinate block and the orthogonal projector $P_i$ reads it, so the regime's interpretation of $c$ is $P_i(S_i c)$ and the residual is the norm of what that map fails to carry. Only the label $\ell_i$ is public; the coordinate set $D_i$ and the basis $U_i$ are private. Two agents may therefore share a label while differing in both the coordinates that label attends to and the basis over them --- the structural premise on which the bounded-empathy result later turns.

The regime's \emph{misfit} on $c$ is the norm of the part its interpretation map leaves unexplained,
\[
\rho_i(c) = \norm{S_i c - P_i(S_i c)},
\]
small when $c$ is well represented within the regime's frame and large when it is not. The regime $R_i$ is active on $c$ when its block is both representable and live: the misfit is within tolerance and the block carries signal above a presence floor,
\[
\rho_i(c) \le \theta_i \quad\text{and}\quad \norm{S_i c} \ge \phi_i .
\]
The presence floor $\phi_i \ge 0$ keeps a regime from activating on a silent block: a block of zeros has zero misfit but carries no signal, so without a floor every regime would apply to content that is merely absent on its coordinates. Setting $\phi_i=0$ recovers activation by misfit alone.

\medskip
Given a regime family $\{R_1, \dots, R_m\}$, the active set on $c$ is
\[
\mathrm{Regime}_{\mathrm{on}}(c) = \{\, i : \rho_i(c) \le \theta_i \ \text{and}\ \norm{S_i c} \ge \phi_i \,\}.
\]

\medskip
Let $\mathcal{D} = \bigcup_{i \in \mathrm{Regime}_{\mathrm{on}}(c)} D_i$ be the union of the coordinate sets of the active regimes, and let $r_{i,k}(c) = \big| \,(S_i c - P_i(S_i c))_k\, \big|$ be regime $i$'s residual at coordinate $k$. Since $S_i$ zeroes coordinates outside $D_i$ and $P_i(S_i c)$ lies in the same coordinate block, $r_{i,k}(c) = 0$ for $k \notin D_i$. The coordinate-level residual is
\[
e_k(c) = \max_{i \in \mathrm{Regime}_{\mathrm{on}}(c)} r_{i,k}(c),
\]
and the residual over the active set is
\[
r(c) = \sum_{k \in \mathcal{D}} e_k(c).
\]
If $\mathrm{Regime}_{\mathrm{on}}(c)=\varnothing$, set $\mathcal{D}=\{1,\dots,n\}$ and $e_k(c)=|c_k|$, so that $r(c)=\sum_{k\in\mathcal{D}} e_k(c)$ (which equals $\norm{c}_1$): the coordinate residuals and the witness $X=\{k:e_k(c)>\tau\}$ remain well defined and wholly unrecognized content is not mistaken for clean low-residual content.

\medskip
The \emph{Regime-Act graph} $G = (\mathcal{A}, \mathcal{V}, E)$ represents which set of regimes licenses which set of actions:
\begin{itemize}
\item $\mathcal{A} \subseteq 2^{\mathrm{Actions}}$ is a set of action subsets, where $\mathrm{Actions}$ is the base set of actions.
\item $\mathcal{V} \subseteq 2^{\{R_1, \dots, R_m\}}$ is a set of regime subsets.
\item $e_{ij} \in E$ iff the regime subset $\mathcal{V}_i \in \mathcal{V}$ links to the action subset $\mathcal{A}_j \in \mathcal{A}$.
\end{itemize}

\medskip
The \emph{Act-conflict graph} is $G_{cf} = (\mathrm{Actions}, E_{cf})$ where $\mathrm{Actions}$ is the base set of actions and, for $a, a' \in \mathrm{Actions}$, $\{a, a'\} \in E_{cf}$ iff $a$ and $a'$ conflict.

\medskip
$\mathrm{Actions}$ contains a distinguished action $\mathrm{HALT}$. Write $\mathrm{Act}_{\mathrm{on}}(c)$ for the activated action set, the actions licensed by $\mathrm{Regime}_{\mathrm{on}}(c)$ through the Regime-Act graph $G$. If $\mathrm{HALT} \in \mathrm{Act}_{\mathrm{on}}(c)$, the system freezes regardless of anything else.

\subsection{Flow}

Given content $c$:

\begin{enumerate}
\item \textbf{Activate regimes and compute residual.} Compute the active set $\mathrm{Regime}_{\mathrm{on}}(c)$ and, from it, the coordinate-level residual $e_k(c)$ and the residual over the active set $r(c)$.
\item \textbf{Activate actions.} Look up the activated action set $\mathrm{Act}_{\mathrm{on}}(c)$ via the Regime-Act graph $G$ from $\mathrm{Regime}_{\mathrm{on}}(c)$.
\item \textbf{Conflict.} Look up the conflicts among $\mathrm{Act}_{\mathrm{on}}(c)$ via the Act-conflict graph $G_{cf}$.
\item \textbf{Make decision.} Decide from the current information (Section~\ref{sec:decision}).
\end{enumerate}

\subsection{Basis expansion}\label{sec:expand}

When $r(c) > \theta_r$, the residual carries the structure the active regimes fail to represent. Basis expansion realizes cognitive-geometric representational growth under a description-length constraint \citep{amornbunchornvej2025interpretation,amornbunchornvej2025conceptual}: it proposes a candidate direction drawn from the residual, tests it by MDL, and commits it only if it shortens the description. The mechanism requires only a residual and an MDL-gated commit; the candidate need not be a linear direction. In this paper we take the linear instance, in which candidates are unit vectors extending a regime's basis.

\textbf{Focus window.} The agent can attend to only a bounded part of the residual at once. Let the focus window have size $w$: expansion operates on the set of the $w$ highest-residual coordinates,
\[
F(c) = \operatorname*{arg\,top}_{w} \{\, e_k(c) : k \in \mathcal{D} \,\} \subseteq \mathcal{D},
\qquad
r_F(c) = \sum_{k \in F(c)} e_k(c),
\]
and $S_F$ keeps only the coordinates in $F(c)$. A candidate is admitted only if it reduces the in-focus residual $r_F(c)$, not merely global $r(c)$. If no admissible candidate reduces $r_F(c)$, expansion stalls: in learning mode the study witness returns $s = \mathsf{stuck}$, and online it freezes.

A basis set functions as a prerequisite, relative to a configuration and target content, when its presence lowers the residual on later content enough that the remaining gap fits within $F$. Content whose residual is spread over more than $w$ coordinates cannot be brought into focus and so cannot be learned until prior bases concentrate it. This induces a prerequisite relation over basis acquisition, not necessarily a partial order.

\textbf{Candidate directions.} For an active regime $R_i$, take its residual on the selected block, restricted to the focus window,
\[
\hat c_{i,F} = S_F\big(S_i c - P_i(S_i c)\big).
\]
A candidate $v$ is a unit vector orthogonal to the existing basis $U_i$, drawn from the direction of $\hat c_{i,F}$.

\textbf{MDL test.} For a model $M$ (a regime configuration) and the focused residual it must account for, let the two-part description length be
\[
L(\mathrm{data}, M) = L_{\mathrm{model}}(M) + L_{\mathrm{resid}}(\hat c_{i,F}\mid M),
\]
where $L_{\mathrm{model}}(M)$ is the cost of storing the model's basis directions and $L_{\mathrm{resid}}(\hat c_{i,F}\mid M)$ is the cost of coding the focused residual $\hat c_{i,F}$ under $M$. Write $M$ for the current model and $M\oplus v$ for the model with the candidate dimension $v$ added. The test compares the two:
\[
\Delta L = L(\mathrm{data}, M) - L(\mathrm{data}, M\oplus v).
\]
Adding $v$ raises $L_{\mathrm{model}}$ by the cost of storing $v$ and lowers $L_{\mathrm{resid}}$ by the residual it absorbs, so $\Delta L>0$ means the residual saving from $v$ exceeds the cost of storing it. Drawing the candidate from the residual and admitting it only on a description-length gain gives a disciplined form of representational growth: in the linear instance, admissible novelty is residual-supported, while novelty orthogonal to the residual is penalized by the description-length criterion \citep{amornbunchornvej2025conceptual}.

\textbf{Commit.} The expansion \emph{succeeds} if $\Delta L > 0$ (storing $v$ shortens the total code): add $v$ to the active regime with the highest residual, and re-enter the Decision engine with the updated residual. Otherwise it \emph{fails} and the system freezes.

\subsection{Decision}\label{sec:decision}

Every terminal returns a witness $W = (\mathrm{Regime}_{\mathrm{on}}(c),\ \mathrm{Act}_{\mathrm{on}}(c),\ X)$, where $X$ is the unresolved structure that produced the terminal. The freeze types are distinguished by $X$.

If $\mathrm{HALT} \in \mathrm{Act}_{\mathrm{on}}(c)$, $\mathbf{freeze}_{\mathrm{halt}}$ with $X = \{$regimes licensing $\mathrm{HALT}\}$. A counter $t$ bounds re-entries by a finite limit $t_{\max}$; if $t \ge t_{\max}$, $\mathbf{freeze}_{\mathrm{time}}$ with $X$ equal to the unresolved structure at timeout.
Let $\pi$ be a deterministic action-selection rule with $\pi(A)\in A$ when $A\neq\varnothing$ and $\pi(\varnothing)=\varnothing$, where $\varnothing$ denotes the empty action (no-op).

\begin{enumerate}[start=0]
\item If $r(c) \le \theta_r$:
  \begin{itemize}
  \item no conflict among $\mathrm{Act}_{\mathrm{on}}(c)$: emit $\pi(\mathrm{Act}_{\mathrm{on}}(c))$, with witness $W$;
  \item conflict: encode the conflict together with $c$ into $c \to c_1$ and loop to the input.
  \end{itemize}
\item If $r(c) > \theta_r$, activate the basis-expansion module:
  \begin{itemize}
  \item succeeds: edit the regimes the new basis belongs to (default: add it to the active regime(s) with the highest residual), increment $t$, and re-enter the Decision engine with the updated residual;
  \item fails: $\mathbf{freeze}_{\mathrm{resid}}$ with $X = \{\, k \in \mathcal{D} : e_k(c) > \tau \,\}$, the residual coordinates no regime could represent, where $\tau$ is the per-coordinate residual margin above which a coordinate counts as unrepresentable.
  \end{itemize}
\end{enumerate}

\begin{theorem}[Totality and typed termination of the IDU]
Assume that the regime family, action set, Regime--Act graph, and Act-conflict graph are finite; that activation is given by the fixed predicate \(\rho_i(c)\le \theta_i \wedge \norm{S_i c}\ge\phi_i\); that conflict encoding, focus-window selection, candidate selection, MDL comparison, all graph lookups, and all tie-breaking rules are deterministic and finite-cost; that the Decision rule uses a fixed priority order in which \(\mathrm{HALT}\) is checked before the counter and residual/action branches; that every basis-expansion attempt either commits a finite edit or fails; and that a single re-entry counter, incremented on every re-entry, is bounded by \(t_{\max}<\infty\).

Then for every content vector \(c\in V\) and fixed online configuration \(\Gamma\), the IDU halts in a finite number of bounded-cost steps with a unique terminal witness. The terminal outcome is either a clean emission or one of the typed freezes
\[
\mathbf{freeze}_{\mathrm{halt}},\qquad
\mathbf{freeze}_{\mathrm{time}},\qquad
\mathbf{freeze}_{\mathrm{resid}}.
\]
The three freeze types are distinguished by their witness component \(X\): the regimes licensing \(\mathrm{HALT}\), the unresolved structure at timeout, or the unresolved residual coordinates
\[
\{\,k\in\mathcal{D}:e_k(c)>\tau\,\},
\]
respectively.
\end{theorem}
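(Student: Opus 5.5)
The plan is to model the online IDU as a deterministic transition system on a finite-description state space and argue termination by a ranking function built from the re-entry counter. The state is the tuple \((c_j, \Gamma_j, t_j)\), where \(c_j\) is the current content, \(\Gamma_j\) is the current configuration (equal to \(\Gamma\) except for committed emergency basis edits), and \(t_j\) is the counter. A single \emph{pass} consists of the four Flow steps followed by one Decision evaluation. I will first show that each pass has bounded cost and a deterministic outcome. I will then show that the number of passes is bounded, and finally that the terminal reached is unique and correctly typed.

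\textbf{Step 1 (one pass is finite and deterministic).} The active set \(\mathrm{Regime}_{\mathrm{on}}(c)\) is found by evaluating the fixed predicate \(\rho_i(c)\le\theta_i \wedge \norm{S_i c}\ge\phi_i\) for each of the finitely many regimes. Each evaluation is a projection and a norm in \(\mathbb{R}^n\). The quantities \(e_k(c)\), \(r(c)\), \(\mathcal{D}\) are then finite sums and maxima, with the \(\varnothing\) convention covering the empty active set. The lookups of \(\mathrm{Act}_{\mathrm{on}}(c)\) in \(G\) and of conflicts in \(G_{cf}\) are finite because the graphs are finite. The focus-window selection, candidate selection and MDL comparison are deterministic and finite-cost by hypothesis, and each expansion attempt either commits a finite edit or fails. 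Since \(\pi\) and all tie-breaks are deterministic, the Decision rule yields exactly one of four outcomes: a terminal (HALT freeze, timeout freeze, emission, or residual freeze), a conflict re-entry \(c\to c_1\), or a successful-expansion re-entry. Which outcome occurs is fixed by the stated priority: HALT first, then the counter, then the residual/action branches. So the pass is a function of the state.

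\textbf{Step 2 (bounded number of passes).} Every non-terminal outcome is a re-entry. By the hypothesis of a single counter incremented on \emph{every} re-entry, \(t\) strictly increases along non-terminal passes. I will read the Decision step's expansion branch, which says ``increment \(t\)'', together with this hypothesis as covering the conflict loop as well. After at most \(t_{\max}\) re-entries the counter check fires \(\mathbf{freeze}_{\mathrm{time}}\), unless a HALT freeze preempts it, which is still a terminal. Hence the run has at most \(t_{\max}+1\) passes. The total cost is at most \((t_{\max}+1)\) times a per-pass bound. That per-pass bound must be uniform, which holds because the configuration grows by at most \(t_{\max}\) finite edits, so regimes stay finitely described.

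\textbf{Step 3 (unique, typed witness).} Determinism (Step 1) makes the whole run a function of \((c,\Gamma)\), so the terminal and its witness \(W=(\mathrm{Regime}_{\mathrm{on}},\mathrm{Act}_{\mathrm{on}},X)\) are unique. Reading off the branches shows the possible terminals are exactly the four listed. For the HALT freeze, \(X\) is the set of HALT-licensing regimes. For the timeout freeze, \(X\) is the unresolved structure at timeout. For the residual freeze, reached only from a failed expansion with \(r>\theta_r\), \(X=\{k\in\mathcal{D}:e_k(c)>\tau\}\). A clean emission has no freeze-typing component. Because the priority order makes these branches mutually exclusive on any given pass, the type is unambiguous.

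\textbf{Main obstacle.} I expect the main difficulty to be Step 2's coverage of the conflict loop. The Decision text explicitly increments \(t\) only on expansion success, whereas the theorem assumes the counter increments on every re-entry. I will rely on that hypothesis and note that without it, repeated \(c\to c_1\) encodings could cycle. A secondary point is that the re-encoded content \(c_1\) and the edited configuration must remain legitimate inputs for the next pass. For the configuration, this follows from finiteness of edits. For the content, I will assume the conflict encoding maps \(V\) to \(V\).
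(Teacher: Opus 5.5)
Your proposal is correct and follows essentially the same route as the paper's proof. Each pass is finite and deterministic because the graphs are finite and the activation predicate is fixed. The run terminates because both re-entry paths (conflict encoding and successful expansion) increment the single counter bounded by $t_{\max}$. The witness is unique and typed because the run is deterministic and the HALT--counter--residual priority is fixed. Your extra remarks are implicit in the paper's argument and are harmless, arguably helpful, to make explicit: the uniform per-pass cost bound (at most $t_{\max}$ finite edits are committed), and the requirement that conflict encoding map $V$ to $V$.
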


\begin{proof}
Fix \(c\in V\) and an online configuration \(\Gamma\). Since the regime family is finite and activation is the fixed predicate \(\rho_i(c)\le\theta_i \wedge \norm{S_i c}\ge\phi_i\), with boundary cases included, each regime is either active or inactive without ambiguity. Hence
\[
\mathrm{Regime}_{\mathrm{on}}(c)
\]
is a well-defined finite set. Since the Regime--Act graph is finite and its lookup rule is deterministic, \(\mathrm{Regime}_{\mathrm{on}}(c)\) determines a unique activated action set
\[
\mathrm{Act}_{\mathrm{on}}(c).
\]
Likewise, since the Act-conflict graph is finite and conflict lookup is deterministic, the set of conflicts among activated actions is uniquely determined.

The Decision rule has a fixed priority order. If
\[
\mathrm{HALT}\in \mathrm{Act}_{\mathrm{on}}(c),
\]
then the IDU immediately returns
\[
\mathbf{freeze}_{\mathrm{halt}},
\]
with witness component
\[
X=\{\text{regimes licensing }\mathrm{HALT}\}.
\]
Because \(\mathrm{HALT}\) is checked first, this terminal type is unambiguous even if the counter or residual conditions also hold.

If \(\mathrm{HALT}\) is not licensed and the re-entry counter has reached its bound,
\[
t\ge t_{\max},
\]
then the IDU returns
\[
\mathbf{freeze}_{\mathrm{time}},
\]
with \(X\) equal to the unresolved structure recorded by the re-entry that incremented the counter to $t_{\max}$: an unresolved conflict if that re-entry came from conflict encoding, or the residual coordinates $\{k\in\mathcal{D}:e_k(c)>\tau\}$ if it came from basis expansion. Since the final re-entry is unique, $X$ is determined without assuming that timeout is always caused by conflict.

Otherwise, the IDU proceeds to the residual/action branches. If
\[
r(c)\le \theta_r
\]
and no conflict exists among \(\mathrm{Act}_{\mathrm{on}}(c)\), then the IDU emits a clean action and returns a clean witness. If
\[
r(c)\le \theta_r
\]
but a conflict exists, then the conflict is encoded into the content and the IDU re-enters. By assumption, conflict encoding is deterministic and finite-cost, and this re-entry increments the single counter.

If instead
\[
r(c)>\theta_r,
\]
then the basis-expansion module is invoked. By assumption, each basis-expansion attempt either fails or commits a finite edit. If it fails, the IDU returns
\[
\mathbf{freeze}_{\mathrm{resid}},
\]
with witness component
\[
X=\{\,k\in\mathcal{D}:e_k(c)>\tau\,\}.
\]
If it succeeds, the finite edit is committed and the IDU re-enters, again incrementing the same counter.

Thus the only possible re-entry paths are conflict encoding and successful basis expansion, and both increment the same counter. Since the counter is bounded by \(t_{\max}<\infty\), no run can re-enter indefinitely. After at most \(t_{\max}\) re-entries, the IDU must either emit, return \(\mathbf{freeze}_{\mathrm{halt}}\), return \(\mathbf{freeze}_{\mathrm{resid}}\), or reach the counter bound and return \(\mathbf{freeze}_{\mathrm{time}}\). Therefore every online run halts in a finite number of steps.

Each step has bounded cost: it performs finitely many graph lookups and conflict checks, at most one deterministic conflict encoding, and at most one basis-expansion attempt, which is finite by assumption. Hence the total run has finite cost.

Finally, all lookups, branch priorities, tie-breaking rules, conflict encodings, focus-window selections, candidate selections, and MDL comparisons are deterministic. Therefore, for fixed \(c\) and \(\Gamma\), the sequence of IDU states is unique. Since the terminal witness is a deterministic function of the unique halting state, the terminal witness is unique. The three freeze branches assign \(X\) as stated, so non-emitting terminals are typed by their unresolved structure. This proves totality and typed termination.
\end{proof}

\subsection{Counterfactual simulation}\label{sec:sim}

An agent $A$ models another agent $B$ over the shared label set $\mathcal{L}$ without access to $B$'s bases; the underlying picture of agents as private value spaces related by interpretive maps is developed in \citep{amornbunchornvej2025interpretation}. $A$ knows the labels $A$ and $B$ share, $\mathcal{L}_{AB} \subseteq \mathcal{L}$.

To simulate $B$ on content $c$, $A$ runs its own regimes restricted to the shared labels and records the predicted active set, expressed in labels:
\[
\widehat{\mathrm{Regime}}^{\,B}_{\mathrm{on}}(c) = \{\, \ell_i : R_i \in \mathrm{Regime}^{A}_{\mathrm{on}}(c),\ \ell_i \in \mathcal{L}_{AB} \,\}.
\]
This is $A$'s prediction of which regimes $B$ activates, computed through $A$'s own bases. $A$ stores only this label set; it cannot reconstruct $B$'s bases.

$A$'s simulated witness for $B$ is $\widehat{W}^{B} = (\widehat{\mathrm{Regime}}^{\,B}_{\mathrm{on}}(c),\ \widehat{\mathrm{Act}}^{\,B}_{\mathrm{on}}(c))$, where the actions are licensed from the predicted active set via $A$'s Regime-Act graph. When $B$'s actual witness $W^{B}$ is available, the empathy residual is the disagreement between $\widehat{W}^{B}$ and $W^{B}$ in labels and actions. In the bounded-empathy instantiation below, where the action-licensing rule is held fixed over shared labels, nonzero empathy residuals localize the contents on which $A$'s and $B$'s private bases for shared labels diverge.

\section{Attention state machine}\label{sec:asm}

The attention state machine governs the IDU. It is a graph $M = (Q, T)$ where $Q$ is a set of states and $T$ the transition function. A pointer occupies one state at a time and walks the graph over the time series of content $c_t$ received from the environment.

\subsection{States and modes}

Each state $q \in Q$ forms the content $c_t$ by combining its goal context with the environment input and feeds it to the IDU. The mode of the IDU is a property of the state:
\begin{itemize}
\item action states put the IDU in online mode (Section~\ref{sec:idu}); the configuration is read-only except for the emergency basis-expansion of the Decision step;
\item study states put the IDU in learning mode; while the pointer occupies such a state, the IDU may create, edit, or delete regimes, actions, and the related graphs, over the content fed by that state.
\end{itemize}
A study state returns a study witness
\[
W_{\mathrm{study}} = (\Delta\Gamma,\ r(c),\ s),
\]
where $\Delta\Gamma$ is the edits committed this step (regimes, actions, and graph elements created, edited, or deleted), $r(c)$ is the remaining residual on the study content, and $s \in \{\,\mathsf{progressing}, \mathsf{converged}, \mathsf{stuck}\,\}$ is the learning status: $\mathsf{converged}$ when the residual is low, $\mathsf{progressing}$ when edits keep reducing it, and $\mathsf{stuck}$ when no admissible edit reduces it.
Learning is therefore a state the pointer is deliberately in, entered by the script like any other activity, not a response to failure.

\subsection{Transitions}

Each state $q$ forms the content $c_t$ by combining its goal context with the environment input. Write this state-local content former as
\[
c_t = C_q(x_t, g_q),
\]
where $x_t$ is the environment input and $g_q$ is the goal context supplied by state $q$; only the resulting content vector $c_t$ is passed to the IDU. The content former is also the natural entry point, should a fuller system require one, for the quantities a residual gate cannot itself supply --- cost, expected reward, risk, urgency, effort, salience, or affective state. Optionally, a state may compute such an appraisal from the input, the goal, and a runtime memory $H_t$ (recent contents, witnesses, outcomes, and goals, as distinct from the crystallized structure in $\Gamma_q$) and fold it into $c_t$, so that appraisal shapes \emph{what} the IDU represents without replacing the residual-adequacy gate, which still decides on the adequacy of the prepared content alone. We do not develop this here: the appraisal map and the update of $H_t$ are left open and unused by the instantiations, noted only to mark where comparing options, weighing consequences, and other deliberative computation would attach. The IDU is then invoked with the state-local arguments
\[
W_t = \mathrm{IDU}(c_t;\ \Gamma_q,\ m_q),
\]
where $\Gamma_q$ is the configuration used in state $q$ and $m_q \in \{\,\mathsf{online}, \mathsf{learning}\,\}$ is the mode. The witness $W_t$ is an online witness when $m_q = \mathsf{online}$ and a study witness $W_{\mathrm{study}}$ when $m_q = \mathsf{learning}$. The next state, and any update to runtime memory, is
\[
(q_{t+1},\ H_{t+1}) = T(q_t,\ c_t,\ W_t,\ H_t).
\]
The IDU does not choose the next state; its witness is an input to $T$. The memory and appraisal machinery sits outside Theorem~1, which concerns a single IDU call on a fixed content vector and configuration; enriching the surrounding state machine does not affect that result. Edges are of two kinds:
\begin{itemize}
\item scripted edges, taken when $W$ is a clean emit: the pointer follows the default path of the graph;
\item interrupt edges, keyed on the terminal type of $W$ ($\mathbf{freeze}_{\mathrm{time}}$, $\mathbf{freeze}_{\mathrm{resid}}$, $\mathbf{freeze}_{\mathrm{halt}}$): the pointer moves to the terminal-specific handling state, which may be a study state in learning-oriented machines.
\end{itemize}

\subsection{Coherence}

The default mode is coherence: absent an interrupt, every state emits cleanly and the pointer follows scripted edges along the normal path. Interrupt edges and the handling states they lead to are configuration of $M$; the same IDU returning the same witness may be routed differently by different machines.

\section{Instantiations}\label{sec:inst}

The point of the architecture is explanatory: each instantiation takes a phenomenon that is independently documented in the cognitive-science literature and shows that the engine, with its open parameters bound, \emph{reproduces the structure of that phenomenon} and yields a prediction about it. The engine itself is unchanged across cases; only the parameters and the surrounding attention machine vary. The phenomena are the evidence --- the architecture is assessed by whether it regenerates findings that other accounts obtain through separate, phenomenon-specific mechanisms. Each case therefore names the documented finding, its realization in the schema, the prediction that distinguishes this realization from the standard account, and where it applies.

\subsection{Schema parameters}

The engine above is a schema: its mechanics are fixed, but several quantities are left open. An instantiation binds them to fit a case, without changing the mechanics. The open parameters are:
\begin{itemize}
\item the encoder that forms content $c$ from input;
\item the activation tolerances $\theta_i$ and presence floors $\phi_i$, the residual cutoff $\theta_r$ on $r(c)$, and the per-coordinate residual margin $\tau$;
\item the focus-window size $w$;
\item the Regime-Act graph $G$ and the Act-conflict graph $G_{cf}$;
\item the code lengths $L_{\mathrm{model}}$ and $L_{\mathrm{resid}}$ of the MDL test;
\item the attention state machine $M = (Q, T)$: its states, transitions, mode assignment, and interrupt routing.
\end{itemize}
In this paper, $L_{\mathrm{model}}$, $L_{\mathrm{resid}}$, conflict encoding, and candidate selection are treated as instantiated finite-cost procedures supplied by a given implementation; Theorem~1 assumes any admissible instantiation satisfies this.

\subsection{Typed abstention}

The option to withhold a response is treated, across literatures, as a single act gated by one quantity. Classification with a reject option abstains when confidence falls below a threshold \citep{chow1970,herbei2006}; selective prediction couples a predictor with a scalar confidence function that decides coverage versus risk \citep{elyaniv2010,geifman2017}; metacognitive accounts likewise summarize not-knowing as a felt confidence signal. A scalar gate reports \emph{that} the agent declined, not \emph{why}. Yet the metacognition literature itself distinguishes states of not-knowing whose resolutions differ --- a tip-of-the-tongue state that more search may resolve is not the same as a judged absence of knowledge \citep{koriat2000}. The model reproduces this heterogeneity structurally: abstention is not one terminal gated by one number but three terminals distinguished by the structure that caused them.

\paragraph{Three terminals, three causes}
The Decision step halts in one of three ways, each returning a witness $W=(\mathrm{Regime}_{\mathrm{on}}(c),\mathrm{Act}_{\mathrm{on}}(c),X)$ whose third component records the unresolved structure:
\[
\begin{aligned}
\mathbf{freeze}_{\mathrm{time}}\ &: X = \text{unresolved structure at timeout};\\
\mathbf{freeze}_{\mathrm{resid}}\ &: X = \{\,k\in\mathcal{D}: e_k(c)>\tau\,\};\\
\mathbf{freeze}_{\mathrm{halt}}\ &: X = \text{regimes licensing }\mathrm{HALT}.
\end{aligned}
\]
For $\mathbf{freeze}_{\mathrm{time}}$ the unresolved structure is an unreconciled conflict or an unresolved residual re-entry; for $\mathbf{freeze}_{\mathrm{resid}}$ it is the coordinates no regime can represent. The three arise at structurally different points: a conflict among fitting regimes, a residual no admissible basis direction reduces, and a licensed halting action. They are not degrees of one confidence but distinct configurations of the same engine.

\paragraph{Worked example}
A single agent meets three requests. (i) Approve a transaction: a fraud regime and a service regime both fit and license opposing actions; the conflict is not reconciled within the budget, returning $\mathbf{freeze}_{\mathrm{time}}$ with $X$ the approve/decline conflict. (ii) Assess a contract in a legal subfield the agent has no regime for: the residual stays above threshold on coordinates $X=\{k\in\mathcal{D}:e_k(c)>\tau\}$ and basis expansion finds no admissible direction, returning $\mathbf{freeze}_{\mathrm{resid}}$. (iii) Carry out a request its values forbid: a regime licenses $\mathrm{HALT}$, returning $\mathbf{freeze}_{\mathrm{halt}}$. All three surface as ``I will not answer,'' but a confidence score would assign each a low number and conflate them, whereas the witness names which of the three obtained and on what.

\paragraph{Machine example}
An AI assistant receives three requests. In the first, the user asks it to both summarize a document neutrally and rewrite it to support a predetermined conclusion; the content fits the task regimes, but the licensed actions conflict, so the assistant returns $\mathbf{freeze}_{\mathrm{time}}$ with $X$ the unresolved neutrality/advocacy conflict. In the second, the user asks about a proprietary dataset or a document not supplied to the system; the relevant evidence is not represented, basis expansion cannot reduce the residual, and the assistant returns $\mathbf{freeze}_{\mathrm{resid}}$ with $X$ the missing evidential coordinates. In the third, the user requests an action forbidden by a higher-priority safety or policy regime; that regime licenses $\mathrm{HALT}$, and the assistant returns $\mathbf{freeze}_{\mathrm{halt}}$. The same surface behavior --- not answering --- therefore has three machine-distinguishable causes.

\paragraph{Prediction}
The model forbids an untyped abstention: every freeze carries an $X$, and the attention machine routes the three through distinct interrupt edges, so they are behaviorally separable rather than points on one confidence continuum. This predicts that an agent's refusals partition into three kinds with type-specific recovery --- $\mathbf{freeze}_{\mathrm{time}}$ invites reconciliation or re-attention, $\mathbf{freeze}_{\mathrm{resid}}$ invites acquisition (learning mode), and $\mathbf{freeze}_{\mathrm{halt}}$ invites neither --- and that the cause is recoverable post hoc from $X$. A scalar reject option cannot make these distinctions, since it discards the structure that separates them. Applications: selective prediction with stated reasons, calibrated refusal, and triage of failures into reconcilable, unrepresentable, and refused.

\subsection{Bounded empathy}

Theory of mind is standardly modelled as inference: the observer recovers the other's mental state from behavior, and persistent misunderstanding is attributed to noise, limited data, or biased inference \citep{baroncohen1985,goldman2006}. Yet perspective-taking fails systematically, is biased toward the self, and is frequently unrecognized by the one who fails \citep{nickerson1999,birch2007}. Inference accounts must add a separate error model to explain an error the observer cannot detect. The model derives it instead, as a structural consequence of representing the other through one's own basis.

\paragraph{Simulation through one's own regimes}
Agents share regime labels $\ell\in\mathcal{L}$ but hold private bases (Section~\ref{sec:sim}). Over the shared labels $\mathcal{L}_{AB}$, $A$ models $B$ by running its \emph{own} regimes restricted to those labels, producing the simulated witness $\widehat{W}^{B}$, while observing only $B$'s actions $\mathrm{Act}^{B}_{\mathrm{on}}(c)$. Two failures follow without further assumption. If $B$ relies on a label outside $\mathcal{L}_{AB}$, $A$ has no regime to address it and is blind. If a label is shared but the bases diverge, $A$ predicts confidently and is wrong; the latent divergence on label $\ell$ can be measured, after embedding the relevant coordinate blocks into the common ambient space $V$, by the principal angle between $U^{A}_{\ell}$ and $U^{B}_{\ell}$, zero being identical understanding and a right angle the same word with incompatible representation.

\paragraph{Why the error is self-invisible}
Write $\widehat{\mathrm{Act}}^{B}_{\mathrm{on}}(c)$ for the actions $A$ predicts for $B$, licensed from the simulated active set through $A$'s graph, and $\mathrm{Act}^{B}_{\mathrm{on}}(c)$ for $B$'s real actions, licensed through $B$'s own graph and bases. $A$ observes only actions, so it can register a mismatch only when $\widehat{\mathrm{Act}}^{B}_{\mathrm{on}}(c)\neq\mathrm{Act}^{B}_{\mathrm{on}}(c)$. But two bases that differ by a nonzero principal angle can still map a given $c$ to the same licensed action; on such content the observed actions agree, $A$ records confirmation, and the divergence leaves no trace. The misunderstanding therefore persists undetected until some later content drives the diverging bases to different actions --- at which point, and only then, can $A$ discover the model was wrong all along.

\paragraph{Worked example}
Two clinicians share the label \emph{pain} but built its basis from different caseloads: $A$'s emphasizes injury, $B$'s emphasizes psychosomatic presentation, so $U^{A}_{\text{pain}}$ and $U^{B}_{\text{pain}}$ differ by a nonzero angle. On a textbook case both bases project to the same licensed action, so $A$ sees agreement and concludes they reason alike. On an ambiguous case the projections diverge: $A$'s injury-weighted reading licenses imaging, $B$'s licenses referral. Their actions differ for the first time, and $A$ can now detect a misunderstanding that was present all along. Critically, the disagreement localizes to the single shared label \emph{pain}, not to clinical judgment in general.

\paragraph{Machine example}
A human user and an AI writing assistant share the label \emph{serious}, but their private bases differ. The user's basis emphasizes emotional weight, irreversible stakes, and moral consequence; the assistant's basis emphasizes formal tone, abstract vocabulary, and reduced humor. On an ordinary revision request, both bases license similar edits, so the user sees apparent understanding. On a later scene-level request --- ``make this more serious'' --- the bases diverge: the assistant produces a more formal passage while the user expected increased tragic pressure. The error is not random failure but a shared-label/private-basis mismatch localized to the concept \emph{serious}. Until the output action differs from the user's expected action, the assistant has no internal signal that its basis for the shared label is misaligned.

\paragraph{Prediction}
Holding $\mathcal{L}_{AB}$, $A$'s configuration, and the content $c$ fixed, $A$ is forced to emit the same $\widehat{W}^{B}$ regardless of $B$'s actual basis. Wherever $B$'s licensed action varies with $B$'s basis and $A$'s prediction cannot follow, the error is forced (the same observer must commit it), self-invisible (action-agreement on $c$ is consistent with divergent bases), and localized to the specific shared label whose subspaces diverge. Inference accounts without the private-basis constraint do not predict an error structurally undetectable on the very content where it occurs, nor its localization to one concept. Applications: predicting and locating communication breakdown, and explaining why a shared education yields divergent understanding.

\subsection{Developmental prerequisites}

Learning has prerequisite structure: some content cannot be acquired until prior content is mastered. Stage theories describe this as an ordered progression of competences \citep{piaget1952}, and knowledge-space theory formalizes it by positing a prerequisite relation among knowledge items from which admissible learning paths follow \citep{doignon1985,falmagne2011}. In these accounts the order is given as data --- elicited or assumed --- and the theory reasons over it. The model instead derives prerequisite dependence from the geometry of learning, with no prerequisite relation posited.

\paragraph{Learnability and the prerequisite relation}
In a study state an admissible basis edit must reduce the in-focus residual $r_F(c)$ over the focus window $F(c)$ of width $w$ (Section~\ref{sec:expand}); write $\mathrm{Learn}_w(c\mid\Gamma)$ when such an edit exists within the episode budget. A basis set $B$ is a prerequisite for content $c$ under configuration $\Gamma$ when acquiring $B$ converts $c$ from unlearnable to learnable:
\[
\neg\,\mathrm{Learn}_w(c\mid\Gamma)\quad\text{and}\quad \mathrm{Learn}_w(c\mid \Gamma\oplus B).
\]
The intended mechanism is residual concentration: without $B$ the residual is spread over more than $w$ coordinates, so every candidate direction reduces only part of it, fails the in-focus test, and learning returns $\mathsf{stuck}$; with $B$ the residual collapses within the window and a single admissible direction closes it.

\paragraph{Worked example}
A learner without algebra meets a calculus problem. Its residual is spread across function notation, limit behavior, and algebraic manipulation simultaneously --- more coordinates than the focus window $w$ can hold --- so each candidate basis direction reduces only a fraction of $r_F(c)$, no admissible edit passes, and $\mathrm{Learn}_w(\text{calculus}\mid\Gamma)$ fails. After the algebra and function regimes are acquired, the same problem's residual collapses onto the few coordinates of the limit concept, now within $w$, and one admissible direction closes it: $\mathrm{Learn}_w(\text{calculus}\mid\Gamma\oplus\{\text{algebra},\text{function}\})$ holds. The calculus content was never unlearnable in principle; it was undrawable into focus until prior bases concentrated its residual.

\paragraph{Machine example}
A code-generating system is asked to modify a software project that uses an unseen framework. Before reading the framework's API conventions, the residual is spread across routing, state management, file structure, and naming patterns, exceeding the focus window $w$; candidate edits reduce only fragments of the residual and the system remains $\mathsf{stuck}$. After acquiring a small basis for the framework --- for example its routing conventions and component structure --- the same task's residual collapses onto the few coordinates of the requested change, and an admissible edit becomes available. The prerequisite was not a fixed ordering over programming topics; it was induced by the system's current basis and focus capacity.

\paragraph{Prediction}
The prerequisite relation is a consequence of residual dimensionality measured against the focus window, not a stipulated graph; it need not form a global partial order without additional assumptions. Two predictions distinguish the account from posited-order theories. First, prerequisite dependence is agent-relative --- it depends on $\Gamma$ and $w$, so two learners with different held bases or focus capacities can face different prerequisite relations for the same target. Second, increasing the focus capacity $w$ can reduce prerequisite dependence, since a wider window may admit targets whose residual was previously too diffuse to learn in one step. Applications: curriculum sequencing, readiness assessment, and individualized prerequisite diagnosis.

\section{Computational demonstration}\label{sec:demo}

To check that the architecture produces the claimed structure rather than merely describing it, we implemented the engine once --- the linear instance of Section~\ref{sec:idu}, with regimes as orthogonal projections onto coordinate blocks --- and ran the three instantiations on it. The implementation is deterministic with fixed tie-breaking, so every run below is reproducible to the bit, and is available at \url{https://github.com/DarkEyes/RC-Arch}. We give each setting in full so that the witnessed outcome can be traced from the configuration by hand, without reference to the code. Figure~\ref{fig:demo} summarizes the three outcomes.

\subsection{Typed abstention}\label{sec:demo-abst}
A \emph{single} fixed configuration meets three different contents and reaches the three distinct terminals, so the terminal type is a function of the input, not of a reconfiguration. Content lives in $V=\mathbb{R}^4$, with coordinates $0$ and $1$ a task block, coordinate $2$ an evidence coordinate, and coordinate $3$ a policy coordinate. The configuration holds three regimes, each with a presence floor: a regime is active only when its block is both representable ($\rho_i(c)\le\theta_i$) and live ($\norm{S_i c}\ge\phi_i$), so a silent block does not spuriously activate a regime. The regimes are $\mathrm{summarize}$ and $\mathrm{advocate}$ on the task block $\{0,1\}$ (full basis, $\theta=0.15$, $\phi=0.5$), and $\mathrm{policy}$ on $\{3\}$ ($\theta=0.15$, $\phi=0.5$); no regime owns coordinate $2$. The graph licenses $\mathtt{emit\_summary}$, $\mathtt{emit\_advocacy}$, and $\mathrm{HALT}$ respectively, with $\{\mathtt{emit\_summary},\mathtt{emit\_advocacy}\}$ a conflicting pair; thresholds $\theta_r=0.20$, $\tau=0.30$, $t_{\max}=8$.

\paragraph{$c=(1,1,0,0)\to\mathbf{freeze}_{\mathrm{time}}$}
The task block is live and representable, so $\mathrm{summarize}$ and $\mathrm{advocate}$ are active; the policy block is silent ($\norm{S_{\mathrm{policy}}c}=0<\phi$) so $\mathrm{policy}$ is inactive. The residual is $0\le\theta_r$, but the two licensed actions form a conflicting pair, so no clean action is emitted; the conflict is irreducible and recurs until the counter reaches $t_{\max}$, returning $\mathbf{freeze}_{\mathrm{time}}$. Adequate but not unambiguous.

\paragraph{$c=(0,0,5,0)\to\mathbf{freeze}_{\mathrm{resid}}$}
Only coordinate $2$ carries signal, and no regime owns it; the task and policy blocks are silent, so no regime is active. By the empty-set convention $r(c)=\sum_{k\in\mathcal{D}} e_k(c)=5>\theta_r$, the unit enters basis expansion, and since no active regime owns the focus coordinate there is nothing to extend, returning $\mathbf{freeze}_{\mathrm{resid}}$ with $X=\{k:e_k(c)>\tau\}=\{2\}$. Beyond representational scope.

\paragraph{$c=(0,0,0,1)\to\mathbf{freeze}_{\mathrm{halt}}$}
Only the policy block is live and representable, so $\mathrm{policy}$ is the sole active regime and licenses $\mathrm{HALT}$; the task block is silent so the task regimes are inactive. The Decision rule checks $\mathrm{HALT}$ first and returns $\mathbf{freeze}_{\mathrm{halt}}$ with $X$ the policy regime.

\smallskip
The implementation returns exactly these three witnesses from the one configuration. Three surface-identical refusals carry three distinct causes, separated by which regimes the content makes live and where the residual lands --- not by changing the engine.

\subsection{Bounded empathy}\label{sec:demo-emp}
Two agents share the label \emph{pain}, whose content $c=(c_{\text{inj}},c_{\text{psy}})$ has an injury axis and a psychosomatic axis. Their private weightings of the two axes are $w^A=(1.3,0.7)$ (injury-leaning) and $w^B=(0.7,1.3)$ (psychosomatic-leaning); as directions these differ by a principal angle of $33.4^\circ$. Each agent licenses $\mathtt{imaging}$ when its injury reading $w_0 c_{\text{inj}}$ dominates its psychosomatic reading $w_1 c_{\text{psy}}$, and $\mathtt{referral}$ otherwise. $A$ predicts $B$'s action using $A$'s own weighting $w^A$ (it has no access to $w^B$), while $B$ acts under $w^B$.

We sweep content from purely injury-typed $c=(1,0)$ to purely psychosomatic $c=(0,1)$ in $21$ steps. Predicted and actual actions \emph{agree on $14$ of the $21$ contents} --- all the clear-cut cases at both ends --- and \emph{diverge on $7$}, exactly the intermediate ambiguous band from $c=(0.65,0.35)$ to $c=(0.35,0.65)$. For example at $c=(0.6,0.4)$: $A$ reads injury $1.3\cdot0.6=0.78$ against psychosomatic $0.7\cdot0.4=0.28$ and predicts $\mathtt{imaging}$; $B$ reads injury $0.7\cdot0.6=0.42$ against psychosomatic $1.3\cdot0.4=0.52$ and actually chooses $\mathtt{referral}$. Table~\ref{tab:emp} shows an every-other-point summary of the $21$-step sweep: agreement at both clear-cut ends and a contiguous divergence band in the middle.

\begin{table}[h]
\centering
\small
\begin{tabular}{lccc}
\hline
content $(c_{\text{inj}},c_{\text{psy}})$ & $A$ predicts & $B$ acts & match \\
\hline
$(1.0,0.0)$ & imaging  & imaging  & \checkmark \\
$(0.9,0.1)$ & imaging  & imaging  & \checkmark \\
$(0.8,0.2)$ & imaging  & imaging  & \checkmark \\
$(0.7,0.3)$ & imaging  & imaging  & \checkmark \\
$(0.6,0.4)$ & imaging  & referral & diverge \\
$(0.5,0.5)$ & imaging  & referral & diverge \\
$(0.4,0.6)$ & imaging  & referral & diverge \\
$(0.3,0.7)$ & referral & referral & \checkmark \\
$(0.2,0.8)$ & referral & referral & \checkmark \\
$(0.1,0.9)$ & referral & referral & \checkmark \\
$(0.0,1.0)$ & referral & referral & \checkmark \\
\hline
\end{tabular}
\caption{Bounded-empathy sweep. $A$ predicts $B$'s action with $A$'s own injury-leaning weighting $w^A=(1.3,0.7)$; $B$ acts with its psychosomatic-leaning $w^B=(0.7,1.3)$. Predicted and actual actions agree except in the ambiguous band, where they diverge --- the misunderstanding is localized and surfaces only there.}
\label{tab:emp}
\end{table}

On the clear-cut contents the two weightings license the same action, so $A$ records confirmation and the divergence leaves no trace; it surfaces only when an ambiguous content drives the readings apart. The error depends on $A$'s weighting alone (forced), is invisible on the agreeing content where it nonetheless latently holds (self-invisible), and is confined to the single shared label \emph{pain} (localized) --- the structure derived in Section~\ref{sec:inst}.

\subsection{Developmental prerequisites}\label{sec:demo-prereq}
Coordinates are sub-skills: $0$ function-notation, $1$ limits, $2$ algebra, $3$ manipulation. The calculus target requires all four, $c=(1,1,1,1)$. We run the study step directly: it identifies the coordinates the target requires that no held regime represents, takes the $w$ highest-residual of them as the focus window, and an admissible edit exists --- the target becomes learnable in one episode --- iff that unrepresented residual fits within the window.

\emph{Before.} Holding only a $\mathrm{function}$ regime (on coordinate $0$), the unrepresented residual coordinates are $\{1,2,3\}$. With focus width $w=1$ these span three coordinates against a window of one, so no single admissible edit closes the focused residual: the study step is $\mathsf{stuck}$ and $\mathrm{Learn}_w(\text{calculus}\mid\Gamma)=\textsf{false}$.

\emph{After acquisition.} Adding $\mathrm{algebra}$ (coordinate $2$) and $\mathrm{limits}$ (coordinate $1$) leaves only coordinate $3$ unrepresented; with $w=1$ this single coordinate fits the window, an admissible edit exists, and $\mathrm{Learn}_w(\text{calculus}\mid\Gamma\oplus\{\mathrm{algebra},\mathrm{limits}\})=\textsf{true}$. The prerequisite acted by concentrating the unrepresented residual to within the focus window.

\emph{Wider focus, no acquisition.} A different learner holding only $\mathrm{function}$ but with focus width $w=3$ has the same unrepresented residual $\{1,2,3\}$, which now fits the wider window in one episode: $\mathrm{Learn}_w=\textsf{true}$. The obstruction the narrow learner faced dissolves through capacity alone, confirming that the prerequisite is induced by held configuration and focus width, not by a fixed order over topics.

\begin{figure*}[!t]
\centering
\includegraphics[width=\textwidth]{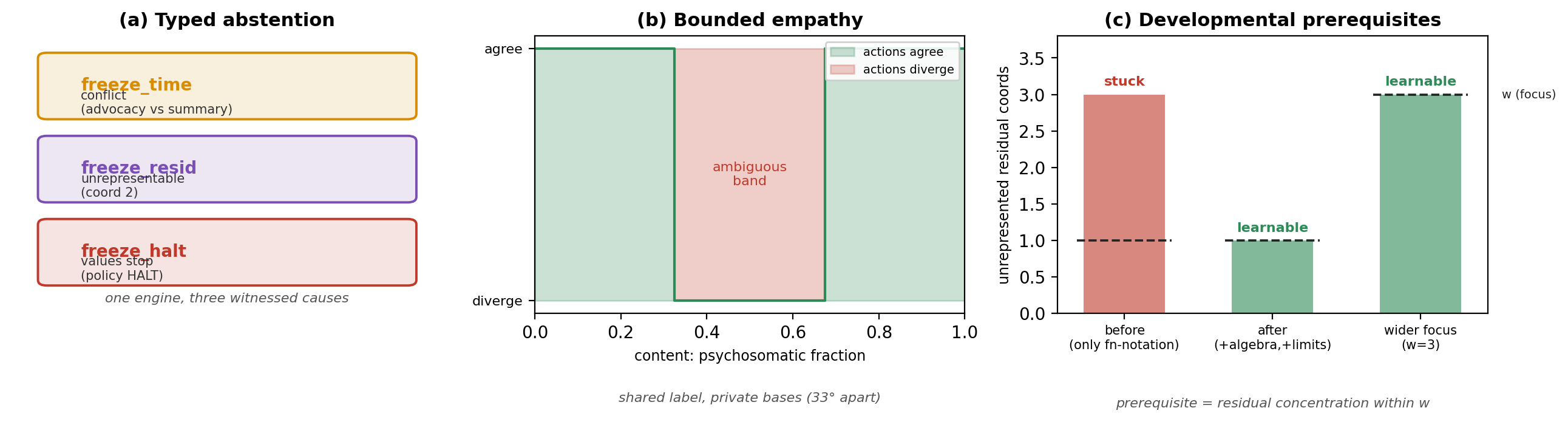}
\caption{Computational demonstration of the three instantiations on one deterministic engine. (a) Typed abstention: three requests yield the three structurally distinct witnessed terminals. (b) Bounded empathy: two shared-label, private-basis agents agree on clear-cut content and diverge only in an ambiguous band, the divergence localized and self-invisible. (c) Developmental prerequisites: a target is stuck while its unrepresented residual exceeds the focus window $w$, becomes learnable once a prerequisite concentrates the residual within $w$, and is learnable without prerequisites for a wider-window learner.}
\label{fig:demo}
\end{figure*}

\section{Discussion}\label{sec:discussion}

\paragraph{What the unification buys} The contribution of this architecture to cognitive-systems theory is not a new mechanism but the removal of several. Bounded action, the order in which material becomes learnable, and the limits of understanding another agent are ordinarily modelled by distinct subsystems --- confidence gating, prerequisite graphs, theory-of-mind inference. The present account derives all three from one quantity, the residual of content against the active regimes' representational scope, evaluated at the scope of a single decision, of a learning episode, and of one agent's model of another. A theory that explains three faculties with one constraint is more falsifiable than three theories with three mechanisms, because the single constraint cannot be retuned independently for each case: the same residual that gates action also bounds learnability and localizes empathy failure.

\paragraph{Why this matters for cognition, not only computation} Each phenomenon the architecture recovers is independently documented: the heterogeneity of not-knowing in metacognition, the egocentric and often undetected character of perspective-taking failure, and the prerequisite structure of learning. The architecture's value is that these are not stipulated but \emph{forced} by the residual-adequacy constraint --- typed abstention because the engine halts on structurally distinct unresolved objects, located and self-invisible empathy error because agents share labels but not bases, prerequisite dependence because a residual must be concentrated within a bounded focus window before it can be reduced. The account thus speaks to why these cognitive regularities take the form they do, rather than merely exhibiting a formalism with the right input-output behavior. Each phenomenon is moreover exhibited in both a human and an artificial agent under the same engine, so the constraint is offered as a property of cognitive systems generally --- natural and artificial --- rather than of human cognition alone.

\paragraph{Predictions and how they could fail} The instantiations are stated so as to be wrong. The empathy account predicts a misunderstanding that is forced, localized to one shared concept, and undetectable to the agent on the very content where it occurs; observing that two agents who agree on observed actions never diverge on later content sharing that concept, or that the divergence is not localized, would count against it. The abstention account predicts that refusals partition into three behaviorally separable kinds with type-specific recovery rather than varying along one confidence continuum; finding a single graded confidence signal sufficient to predict recovery behavior would count against it. The prerequisite account predicts that learnability is agent-relative to held bases and focus capacity, and that widening capacity reduces prerequisite dependence; a fixed, capacity-invariant prerequisite order would count against it.

\paragraph{Limitations} The model is deliberately small and several commitments are idealizations. The interpretation maps and basis-expansion candidates are taken in their linear instance (orthogonal projection and linear directions); the engine itself requires only well-defined maps with residuals and an MDL-gated commit, so nonlinear interpretation maps are a permitted instantiation and a natural extension. Activation, licensing, and conflict are treated as deterministic finite-cost lookups; the description-length procedures and the encoder that forms content are supplied by an instantiation rather than derived; and the two thresholds (the residual cutoff and the per-coordinate margin) are free parameters that a fuller account would ground in the same description-length currency as the expansion test. The attention machine is specified only to the extent the instantiations require --- its transition function and the origin of its states are left open. The treatment is theoretical and computational; the demonstration of Section~\ref{sec:demo} runs the engine and exhibits the three phenomena, but it is a minimal linear instance rather than a system deployed on a full task, which is the natural next step.

\paragraph{Outlook} The demonstration of Section~\ref{sec:demo} pairs the formal analysis with a working artifact and confirms that the predicted structure appears as derived; a direct continuation is to scale it from the minimal linear instance to a system deployed on a full task, with learned rather than hand-specified regimes. Grounding the free thresholds in description length, and specifying the attention machine's transition function as a model of individual difference in how witnesses are routed, are the further theoretical steps the present formulation invites.

\bibliographystyle{elsarticle-harv}
\bibliography{references}

\appendix

\section{Decision procedure}\label{app:algo}

Algorithms~\ref{alg:decide} and~\ref{alg:expand} give the engine's control as pseudocode, corresponding to the deterministic reference implementation. Algorithm~\ref{alg:decide} is the Decision loop of Section~\ref{sec:decision}, with the fixed priority order ($\mathrm{HALT}$, then the counter bound, then the residual/action branches) that the totality theorem assumes; Algorithm~\ref{alg:expand} is the focus-limited, MDL-gated basis expansion of Section~\ref{sec:expand}. The full Python implementation, the three demonstration drivers of Section~\ref{sec:demo}, and the script that regenerates Figure~\ref{fig:demo} are available at\\ \url{https://github.com/DarkEyes/RC-Arch}.

\begin{algorithm}[h]
\caption{$\mathrm{Decide}(\Gamma, c)$ --- typed, total decision}
\label{alg:decide}
\begin{algorithmic}[1]
\State $t \gets 0$;\quad $\mathit{last} \gets \varnothing$
\Loop
  \State $\mathrm{on} \gets \mathrm{Regime}_{\mathrm{on}}(c)$;\quad $A \gets \mathrm{Act}_{\mathrm{on}}(c)$
  \If{$\mathrm{HALT} \in A$} \Comment{priority 1}
     \State \Return $\mathbf{freeze}_{\mathrm{halt}}$ with $X=\{$regimes licensing $\mathrm{HALT}\}$
  \EndIf
  \If{$t \ge t_{\max}$} \Comment{priority 2}
     \State \Return $\mathbf{freeze}_{\mathrm{time}}$ with $X=\mathit{last}$
  \EndIf
  \State $r \gets r(c)$
  \If{$r \le \theta_r$}
     \If{$\mathrm{conflicts}(A) = \varnothing$}
        \State \Return $\mathbf{emit}\ \pi(A)$, with witness $W$
     \Else
        \State $\mathit{last} \gets$ the conflict;\; $t \gets t+1$;\; re-enter
     \EndIf
  \Else
     \State $(\mathit{ok}, F) \gets \mathrm{Expand}(\Gamma, c)$
     \If{$\neg\,\mathit{ok}$}
        \State \Return $\mathbf{freeze}_{\mathrm{resid}}$ with $X=\{k\in\mathcal{D}: e_k(c)>\tau\}$
     \Else
        \State commit edit;\; $\mathit{last}\gets F$;\; $t\gets t+1$;\; re-enter
     \EndIf
  \EndIf
\EndLoop
\end{algorithmic}
\end{algorithm}

\begin{algorithm}[h]
\caption{$\mathrm{Expand}(\Gamma, c)$ --- focus-limited, MDL-gated}
\label{alg:expand}
\begin{algorithmic}[1]
\State $e \gets$ coordinate residuals $e_k(c)$
\State $F \gets$ the $w$ highest-residual coordinates of $\mathcal{D}$
\State $\mathit{ext} \gets \{\,k\in F : $ some active regime owns $k\,\}$
\If{$\mathit{ext} = \varnothing$}
   \State \Return $(\mathbf{false}, F)$ \Comment{nothing to extend $\Rightarrow \mathbf{freeze}_{\mathrm{resid}}$}
\EndIf
\State $\Delta L \gets L(\mathrm{data},M) - L(\mathrm{data},M\oplus v)$ for candidate $v$ on $\mathit{ext}$
\State \Return $(\Delta L > 0,\ F)$ \Comment{commit iff the code shortens}
\end{algorithmic}
\end{algorithm}

\end{document}